\newcommand*{\mailto}[1]{\href{mailto:#1}{\nolinkurl{#1}}}
\newtheorem{theorem}{Theorem}[section]
\newtheorem{lemma}[theorem]{Lemma}
\newcommand{\R}{\mathbb{R}}
\newcommand{\Z}{\mathbb{Z}}
\newcommand{\C}{\mathbb{C}}
\newcommand{\M}{\mathbb{M}}
\newcommand{\nn}{\nonumber}
\newcommand{\beq}{\begin{equation}}
\newcommand{\eeq}{\end{equation}}
\newcommand{\bea}{\begin{eqnarray}}
\newcommand{\eea}{\end{eqnarray}}
\newcommand{\ol}{\overline}
\newcommand{\pa}{\partial}
\newcommand{\I}{\mathrm{i}}
\newcommand{\E}{\mathrm{e}}
\DeclareMathOperator{\res}{Res}
 \newcommand{\noprint}[1]{}
\newcommand{\la}{\lambda}
\numberwithin{equation}{section}
\begin{document}
\title[On the form of dispersive shock waves of the KdV equation]{On the form of dispersive shock waves of the Korteweg--de Vries equation}

\author[I.\ Egorova]{Iryna Egorova}
\address{B. Verkin Institute for Low Temperature Physics\\ 47, Lenin ave\\ 61103 Kharkiv\\ Ukraine}
\email{\href{mailto:gladkazoya@gmail.com}{gladkazoya@gmail.com}}

\author[Z.\ Gladka]{Zoya Gladka}
\address{B. Verkin Institute for Low Temperature Physics\\ 47, Lenin ave\\ 61103 Kharkiv\\ Ukraine}
\email{\href{mailto:iraegorova@gmail.com}{iraegorova@gmail.com}}

\author[G.\ Teschl]{Gerald Teschl}
\address{Faculty of Mathematics\\ University of Vienna\\
Oskar-Morgenstern-Platz 1\\ 1090 Wien\\ Austria\\ and International Erwin Schr\"odinger
Institute for Mathematical Physics\\ Boltzmanngasse 9\\ 1090 Wien\\ Austria}
\email{\href{mailto:Gerald.Teschl@univie.ac.at}{Gerald.Teschl@univie.ac.at}}
\urladdr{\href{http://www.mat.univie.ac.at/~gerald/}{http://www.mat.univie.ac.at/\string~gerald/}}

\keywords{KdV equation, steplike, dispersive shock wave}
\subjclass[2000]{Primary 37K40, 35Q53; Secondary 33E05, 35Q15}

\thanks{Zh. Mat. Fiz. Anal. Geom. {\bf 12}, 3--16 (2016)}
\thanks{Research supported by the Austrian Science Fund (FWF) under Grant V120.}

\begin{abstract}
We show that the long-time behavior of solutions to the Korteweg--de Vries shock problem can be described as a slowly modulated one-gap solution in the dispersive shock region.
The modulus of the elliptic function (i.e., the spectrum of the underlying Schr\"odinger operator) depends only on the size of the step of the initial data and
on the direction, $\frac{x}{t}=const.$, along which we determine the asymptotic behavior of the solution.
In turn, the phase shift (i.e., the Dirichlet spectrum) in this elliptic function depends also on the scattering data, and is computed explicitly via the Jacobi inversion problem.
\end{abstract}

\maketitle
\section{Introduction}

The Korteweg--de Vries (KdV) shock problem is concerned with the long-time behavior of the solution of the KdV equation
\[
q_t(x,t)=6q(x,t)q_x(x,t)-q_{xxx}(x,t), \quad (x,t)\in\R\times\R,
\]
with steplike initial profile 
\beq \label{ini}
\left\{ \begin{array}{ll} q(x,0)\to 0,& \ \ \mbox{as}\ \ x\to +\infty,\\
q(x,0)\to -c^2,&\ \ \mbox{as}\ \ x\to -\infty.\end{array}\right.
\eeq
This behavior is well understood on a physical level of rigor for the pure step initial data (i.e.\ when $q(x,0)=0$ as $x>0$, $q(x,0)=-c^2$ as $x<0$), and
was studied using a quasi-classical Whitham approach in \cite{FW}, \cite{gp1}, \cite{gp2} with further treatments using the matched-asymptotic method
in \cite{AB}, \cite{LN}. Extensions to steplike finite-gap backgrounds were given in \cite{Bik1}--\cite{Bik2} and \cite{N}. 
This led to the following three main regions with different asymptotical behavior of the solution:
\begin{enumerate} [1.]
\item Region $x<-6 c^2 t$, where the solution is asymptotically close to the background $-c^2$ up to a decaying dispersive tail.
\item Middle region $-6c^2 t<x<4c^2 t$, also known as dispersive shock or elliptic region, where the solution can asymptotically be described by a modulated elliptic wave.
\item Soliton region $4c^2 t < x$, where the solution is asymptotically given by a sum of solitons.
\end{enumerate}
We refer to our paper \cite{EGKT} for further details and more on the history of this problem. In this note we want to revisit the middle region which is the most interesting and challenging one from a mathematical point of view.
In particular, in the case of pure step initial data Gurevich and Pitaevskii (\cite{gp1}, \cite{gp2}, see also \cite{LN}) derived the following large-time asymptotical formula in the elliptic region:  
\beq\label{dnu}
q(x,t)\sim q_\text{GP}(x,t)= -2c^2 \mathrm{dn}^2\left(2 t c(6\xi - c^ 2(1 +\mathfrak m^2(\xi)) ,\,\mathfrak m(\xi)\right) +c^2(1 -\mathfrak m^2(\xi)),
\eeq
where $\xi=\frac {x}{12 t},$ and the modulus $\mathfrak m(\xi)$ is determined implicitly by
\beq\label{neav}
\frac{c^2}{6} \left(1 +\mathfrak m^2(\xi)-\frac{2\mathfrak m^2(\xi)(1-\mathfrak m^2(\xi))K(\mathfrak m(\xi))}{[E(\mathfrak m(\xi)) - (1-\mathfrak m^2(\xi))K(\mathfrak m(\xi))]}\right)
= \xi.
\eeq
Here $\mathrm{dn}(s,\mathfrak m)$ is the Jacobi elliptic function and $K(\mathfrak m)$, $E(\mathfrak m)$ are the standard complete elliptic integrals.
Function $q_\text{GP}(x,t)$ is a stationary running wave of the KdV equation if the parameter $\xi$ is a constant. 

On the other hand, in \cite{EGKT} under the assumption
\beq\label{decay}
\int_0^{+\infty} \E^{C_0 x}(|q(x)| + |q(-x)+c^2|dx<\infty,\ \ \ C_0>c>0,
\eeq
 we derived the formula (for precise conditions see our paper)
\beq \label{secon}
q(x,t)\sim q_\text{RH}(x,t) = \frac{\Gamma(\xi)B'(\xi)}{6\pi} \frac{d^2}{dv^2}
 \log\theta_3\left(\frac{t B(\xi) +\Delta(\xi)}{2\pi}+v \right)\Big|_{v=0}- \frac{z'(\xi)}{6},
\eeq
where all quantities are associated with the elliptic Riemann surface $y^2=\la(\la +c^2 )(\la+a^2(\xi))$ depending on the parameter $a(\xi)>0$ for $\xi\in (-c^2/2, c^2/3)$
implicitly determined by 
\beq\label{norm5}
\int_{a(\xi)}^{0}\left(\xi +\frac{c^2 -a^2(\xi)}{2} - s^2\right)\sqrt{\frac{s^2 - a^2(\xi)}{c^2 - s^2}}ds=0.
\eeq
As is shown in \cite{KM}, \cite{KM1}, the point $a(\xi)$ increases monotonically from $0$ to $c>0$ as $\xi$ changes from $-c^2/2$ to $c^2/3$. The quantities
$B$, $\Gamma$, $z$ and $\Delta$ are defined in \eqref{defB}--\eqref{defDelta} below.

Formula \eqref{secon} is very reminiscent of the usual Its--Matveev formula for the one-gap solution of the KdV equation. Hence this raises two natural questions, namely
whether \eqref{secon} agrees with \eqref{dnu} and whether they reduce to the Its--Matveev formula for constant $\xi$, that is, whether \eqref{secon} can be viewed as a slowly
modulated one-gap solution of the KdV equation. The purpose of the present note is to give a positive answer to both questions.

\section{Comparison between $q_\text{RH}(x,t)$ and the Its--Matveev formula}

We begin by recalling the well-known Its--Matveev formula for finite-gap solution of the KdV equation in the case of one gap in the spectrum (see, for example,  \cite{gh1}, \cite{Kuksin}, \cite{M}, \cite{IM}). To facilitate further comparison of this formula with $q_\text{RH}(x,t)$ we suppose that the spectrum of this one gap solution is the set 
$\sigma=[-c^2, -a^2(\xi)]\cup [0,\infty)$. Let $\M=\M(\xi)$ be the elliptic Riemann surface associated with the function
\[
\mathcal R(\la):=\mathcal R(\la,\xi)=\sqrt{\la(\la +c^2 )(\la+a^2(\xi))},
\]
where the cuts are taken along the spectrum $\sigma$, and $\mathcal{R}(\la)$ takes positive values on the upper side of the cut along the interval $[0,\infty)$. A point on $\M$ is denoted by $p=(\la, \pm)$, $\la \in \C$, and
the projection onto $\C \cup \{\infty\}$ is denoted by $\pi(p)=\la$.
The sheet exchange map is given by
$
p^*=
(\la,\mp)  \text{ for } p=(\la,\pm).$
The sets
\begin{align*}
\Pi_U = \{(\la,+) \mid \la \in \C \setminus \sigma\} \subset \M, \quad
\Pi_L = \{p^* \mid p \in \Pi_U\},
\end{align*}
are called the upper, lower sheet, respectively.  Introduce a canonical basis of $\mathfrak a$ and $\mathfrak b$ cycles. The cycle $\mathfrak b$ surrounds the interval $[-c^2, -a^2]$ clockwise on the upper sheet, and the cycle $\mathfrak a$ supplements $\mathfrak b$ passing along the gap from $-a^2$ to $0$ in the positive direction on the lower sheet and back from $0$ to $-a^2$ on the upper sheet. Next, let $d\omega$ be a holomorphic differential on $\M$, normalized by $\int_{\mathfrak a}d\omega=2\pi\I$. Evidently,
\beq\label{omm}
d\omega=\tilde\Gamma \frac{ d\la}{\mathcal R(\la)},\quad \tilde\Gamma=2\pi\I \left(\int_\mathfrak{a} \frac{d\la}{\mathcal R(\la)}\right)^{-1}>0,\eeq
and 
\beq\label{taup} \tilde\tau=\int_\mathfrak{b} d\omega<0.\eeq
Define now the theta function of $\M$ by 
\[
\theta(z)=\theta(z\mid \tilde\tau)=\sum_{m\in\Z}\exp\{\frac{1}{2}\tilde\tau m^2 + z m\}.
\]
Recall that this function is even and takes real values for $z\in\I\R\cup\R$. 

Next, following \cite{Kuksin}, introduce two meromorphic differentials of the second kind $d\Omega_1$ and $d\Omega_3$ with vanishing $\mathfrak a$-periods and with the only pole at infinity of the form:
\beq\label{omm4}
d\Omega_1=\frac{\I}{2}\frac{\la - h}{\mathcal R(\la)}d\la,\quad d\Omega_3=-\frac{3\I}{2}\frac{(\la - \nu_1)(\la - \nu_2)}{\mathcal R(\la)}\,d\la,
\eeq
where
\beq\label{defh}
h=\int_\mathfrak{a} \frac{\la d\la}{\mathcal R(\la)}\left(\int_\mathfrak{a} \frac{ d\la}{\mathcal R(\la)}\right)^{-1} \in (-a^2, 0),
\eeq
and the points $\nu_j\in\R$ are chosen such that $\int_\mathfrak{a}d\Omega_3=0$,  and
\beq\label{nu2}
2\nu_1 + 2\nu_2 + c^2 +a^2=0.
\eeq
The last equality guarantees the absence of the term of order $\la^{-1/2}d\la $ in representation for $d\Omega_3$.
Note, that at least one of the points $\nu_i$ lies in the gap $(-a^2, 0)$. Denote
\beq\label{VW}
\I V=\int_\mathfrak{b} d\Omega_1, \ \mbox{and}\ \ \I W=\int_\mathfrak{b} d\Omega_3.
\eeq
The values $V$ and $W$ are called the wave number and  the frequency. Evidently, $V,W\in\R$. 

Next recall the Abel map $A(p)=\int_\infty^p d\omega$. Let $p_0$ be a point on $\M$ with the projection in the gap, $\pi(p_0)\in[-a^2,0]$. For  such  a point the value $A(p_0) +\mathcal K$ is pure imaginary (cf.\ \cite{dubr}), where $\mathcal K=-\frac{\tilde\tau}{2} +\pi \I$ is the Riemann constant. Then the Its--Matveev formula for the one-gap solution with initial Dirichlet divisor $p_0$ reads (cf.\ \cite{Kuksin}, \cite{M}):
\beq\label{IM}
q_\text{IM}(x,t)=-2\frac{d^2}{dx^2}\log\theta\left(\I V x- 4\I W t - A(p_0) - \mathcal K\right) -a^2 - c^2 - 2 h,
\eeq
where $h$ is defined by \eqref{defh}. 

Now we will study in more details formula \eqref{secon}, where the quantities are defined as follows: $\tau\in\I\R_+$ is the period given by
\beq\label{taupe}
\tau:=\tau(\xi)=-\int_{-c^2}^{-a^2(\xi)} \frac{d\la}{\sqrt{\la(\la+c^2)(\la + a^2(\xi))}}\left( \int_{-a^2(\xi)}^{0} \frac{d\la}{\sqrt{\la(\la+c^2)(\la + a^2(\xi))}} \right)^{-1},
\eeq
and
\[
\theta_3(v)=\theta_3(v\mid\tau)=\sum_{m\in\Z}\exp\{(m^2\tau + 2mv)\pi\I\}
\]
is the associated theta function. Furthermore,
\beq\label{defB}
B(\xi) = 24 \int_{a(\xi)}^{c}\left(\xi +\frac{c^2 -a^2(\xi)}{2} - s^2\right)\sqrt{\frac{s^2 - a^2(\xi)}{c^2 - s^2}}ds;\eeq  \beq\label{Gamdef}
\Gamma(\xi)=
-\frac{1}{2}\left(\int_{-a(\xi)}^{a(\xi)}
\left((c^2 - s^2)(a^2(\xi) - s^2)\right)^{-1/2}ds\right)^{-1};
\eeq
\beq \label{defz}
z(\xi) =\frac{12\xi(c^2 - a(\xi)^2) + 3c^4 + 9a(\xi)^4-6a(\xi)^2c^2}{2};
\eeq
\beq\label{defDelta}
\Delta(\xi) =2\int_{a(\xi)}^c\frac{\log|(\ol T(\I s)T_1(\I s)|}{\sqrt{(c^2 - s^2)(s^2 -a(\xi)^2 )}} ds\left(\int_{-a(\xi)}^{a(\xi)}\frac{ds}{\sqrt{(c^2 - s^2)(a^2(\xi)-s^2)}}\right)^{-1};
\eeq
$T$ and $T_1$ are the left and the right transmission coefficients of the initial data \eqref{ini}.
In formulas \eqref{defB}, \eqref{Gamdef}, and \eqref{defDelta} the positive value of square root is taken.

Denote $-a^2(\xi)=\gamma(\xi):=\gamma$ ,
\beq\label{defmu}
\mu =-\xi -\frac{c^2 +\gamma}{2},
\eeq
and put $\la=-s^2$ in \eqref{norm5}. Then \eqref{norm5} is equivalent to
\beq\label{norm6}
\int_\gamma^0 G(\la) d\la=0, \quad G(\la,\xi):=\frac{(\la - \mu)(\la-\gamma)}{\mathcal R(\la)},
\eeq
or
\beq\label{norm7}
F(\gamma,\xi):=\int_\gamma^0 \left(\la + \xi +\frac{\gamma + c^2}{2}\right)\sqrt{\frac{\la - \gamma}{\la(\la + c^2)}}\ d\la=0.
\eeq
Equation \eqref{norm7} determines the function $\gamma(\xi)$ implicitly, and the function $\mu(\xi)$ by \eqref{defmu}.

\begin{lemma}\label{edecr}
For $\xi\in (-c^2/2,\,c^2/3)$ the function $\gamma(\xi)$ decays monotonically from $0$ to $-c^2$, and  $\mu(\xi)\in (\gamma(\xi), 0)$. Moreover,
\beq\label{deriv}
\frac{d}{d\xi}\gamma(\xi)=4\frac{h(\xi) - \gamma(\xi)}{3\gamma(\xi) + 2\xi +c^2},
\eeq
where $h(\xi)$ is defined by formula \eqref{defh} with $\mathcal R(\la)=\sqrt{\la(\la-\gamma(\xi))(\la + c^2)}$.
\end{lemma}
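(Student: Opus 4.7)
The plan is to establish the three claims — $\mu(\xi)\in(\gamma(\xi),0)$, the derivative formula \eqref{deriv}, and monotonicity with the stated endpoint values — in that order, with each feeding into the next.

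First I would prove $\mu\in(\gamma,0)$ directly from the defining equation in the form \eqref{norm6}. On the open interval $(\gamma,0)$ one has $\lambda<0$, $\lambda+c^2>0$ and $\lambda-\gamma>0$, so the radicand $\lambda(\lambda+c^2)(\lambda-\gamma)$ is negative and $\mathcal R(\lambda)$ is purely imaginary with a definite sign fixed by the branch. Consequently the factor $(\lambda-\gamma)/\mathcal R(\lambda)$ is purely imaginary with constant sign on $(\gamma,0)$ (vanishing only at $\lambda=\gamma$). Hence the vanishing of the integral in \eqref{norm6} forces the remaining linear factor $\lambda-\mu$ to change sign strictly inside $(\gamma,0)$, which gives $\gamma<\mu<0$.

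Next, to derive \eqref{deriv}, I would differentiate $F(\gamma(\xi),\xi)=0$ implicitly. Since $\xi$ enters $F$ only through $\mu$, with $\partial_\xi\mu=-1$,
\[
\partial_\xi F=\int_\gamma^0\frac{\lambda-\gamma}{\mathcal R(\lambda)}\,d\lambda.
\]
For $\partial_\gamma F$ the Leibniz boundary contribution at $\lambda=\gamma$ vanishes because $G(\gamma,\xi)=0$ (the factor $\lambda-\gamma$ beats the $1/\sqrt{\lambda-\gamma}$ singularity of $\mathcal R^{-1}$). Using $\partial_\gamma\mu=-1/2$ together with $\partial_\gamma\mathcal R^2=-\lambda(\lambda+c^2)$ and the identity $\lambda(\lambda+c^2)=\mathcal R^2/(\lambda-\gamma)$, a short simplification yields the clean expression
\[
\partial_\gamma G=\frac{\mu-\gamma}{2\,\mathcal R(\lambda)},\qquad \partial_\gamma F=\frac{\mu-\gamma}{2}\int_\gamma^0\frac{d\lambda}{\mathcal R(\lambda)}.
\]
Combining $\gamma'(\xi)=-\partial_\xi F/\partial_\gamma F$ and recognising the ratio $\int_\gamma^0(\lambda/\mathcal R)\,d\lambda\big/\int_\gamma^0 d\lambda/\mathcal R=h$ from \eqref{defh} (the $\mathfrak a$-period integrals are proportional to the corresponding integrals over $(\gamma,0)$ with the same proportionality constant) gives
\[
\gamma'(\xi)=-\frac{2(h-\gamma)}{\mu-\gamma}=\frac{4(h-\gamma)}{3\gamma+2\xi+c^2},
\]
where the last step uses $\mu-\gamma=-(3\gamma+2\xi+c^2)/2$ from \eqref{defmu}. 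This is \eqref{deriv}.

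Monotonicity is then immediate: $h\in(\gamma,0)$ by \eqref{defh}, so $h-\gamma>0$, while Step 1 gives $\mu-\gamma>0$, equivalent to $3\gamma+2\xi+c^2<0$; hence $\gamma'(\xi)<0$ throughout $(-c^2/2,c^2/3)$. The endpoint limits $\gamma(-c^2/2)=0$ and $\gamma(c^2/3)=-c^2$ follow by inspection of the degenerate cases of \eqref{norm7} (or may simply be invoked from \cite{KM,KM1}). The main technical obstacle is to justify differentiation under the integral when the limit and the integrand both depend on $\gamma$ and $\mathcal R$ vanishes at $\lambda=\gamma$; this is handled by observing that $G$ vanishes at the lower endpoint and that the resulting singularity of $\partial_\gamma G$ is only $1/\sqrt{\lambda-\gamma}$, hence integrable.
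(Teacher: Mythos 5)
Your proposal is correct, and the computation of \eqref{deriv} is the same implicit differentiation as in the paper: the authors also write $\gamma'=-\partial_\xi F\,(\partial_\gamma F)^{-1}$ and arrive at $\gamma'=4\mathcal S/(3\gamma+2\xi+c^2)$ with $\mathcal S=h-\gamma$, exactly your ratio of period integrals; your explicit simplification $\partial_\gamma G=(\mu-\gamma)/(2\mathcal R)$ checks out, and your remark about the integrable $1/\sqrt{\lambda-\gamma}$ singularity at the moving endpoint supplies a justification the paper only does ``formally.'' Where you genuinely diverge is the monotonicity step. The paper argues by continuation: it verifies $\gamma(c^2/3)=-c^2$ and $f(c^2/3)=3\gamma+2\xi+c^2<0$ there, and then shows that the sign of $f$ cannot change on $(-c^2/2,c^2/3)$ because a zero $\xi_0$ of $f$ would force $\mu(\xi_0)=\gamma(\xi_0)$, which (by the same fact you prove in your first step, that \eqref{norm6} with $\gamma<0$ forces $\mu\in(\gamma,0)$) can only happen at $\xi_0=-c^2/2$. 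You instead promote that fact to a standalone first step, proved by the constant-sign-of-the-weight argument, and then read off $f=-2(\mu-\gamma)<0$ pointwise, so that $\gamma'<0$ follows directly from \eqref{deriv} together with $h\in(\gamma,0)$. Your route is shorter and avoids the slightly informal ``monotonicity could only stop if $f$ changes sign'' continuation argument, at the cost of having to be explicit about the branch convention in \eqref{norm7} so that the weight $\sqrt{(\lambda-\gamma)/(\lambda(\lambda+c^2))}$ really has constant sign on $(\gamma,0)$; both approaches ultimately lean on the same two positivity facts, $h-\gamma>0$ and $\mu-\gamma>0$.
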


\begin{proof}
The existence and uniqueness of $\gamma(\xi)$ is proved in \cite{KM}.
Formally differentiating \eqref{norm7} with respect to $\xi$ gives
$\frac{d\gamma}{d\xi}=-\frac{\partial F}{\partial \xi}(\frac{\partial F}{\partial \gamma})^{-1}$. In turn, this implies 
\[
\frac{d\gamma}{d\xi}=\frac{4\mathcal S}{3\gamma + 2\xi +c^2},\quad 
\mathcal S = \int_\gamma^0\frac{\lambda-\gamma}{\mathcal R(\lambda)}d\lambda \left(\int_\gamma^0\frac{d\lambda}{\mathcal R(\lambda)}\right)^{-1},
\]
which implies \eqref{deriv}. 
Evidently $\mathcal S>0$ for $\gamma<0$. Therefore, to prove monotonicity of $\gamma(\xi)$ it is sufficient to prove that 
\[
f(\xi)=3\gamma(\xi)+2\xi+c^2
\]
is negative for $\xi\in (-c^2/2, c^2/3)$, and that $\gamma(\xi)$ is also negative there.  We observe that $F(-c^2, c^2/3)=0$, therefore $\gamma(c^2/3)=-c^2$. Since $f(c^2/3)=-4c^2/3<0$, then $\gamma^\prime(c^2/3)<0$. Thus $\gamma(\xi)$ grows continuously starting from $-c^2$ when $\xi$ decays starting from $c^2/3$. The function $f$ is a continuous function of $\xi$, and the monotonicity of $\gamma$ could only stop if there is a change of sign of $f$. Let
$\xi_0$ be a point where $f(\xi_0)=0$.
Then $\xi_0$ satisfies the system
\[
\begin{cases}
&3\gamma(\xi_0)+2\xi_0+c^2=0\\
&2\mu(\xi_0)-2\xi_0-c^2-\gamma(\xi_0)=0.
\end{cases}
\]
 Thus $\mu(\xi_0)=\gamma(\xi_0)$. But for $\gamma<0$ formula \eqref{norm6} holds iff $\mu\in(\gamma, 0)$. This means that $\mu(\xi_0)=\gamma(\xi_0)=0$, that is $\xi_0=-c^2/2$. 
\end{proof}

\begin{lemma}\label{abelint}
Let $\gamma(\xi)$ and $\mu(\xi)$ be as in Lemma \ref{edecr} and $h(\xi)$ as in \eqref{defh}. Let $G(\la,\xi)$ be defined by  \eqref{norm6} and  $d\Omega_j$, $j=1,3$  by \eqref{omm4}-\eqref{nu2}.  Then
 for any $\xi\in \left(-\frac{c^2}{2}, \frac{c^2}{3}\right)$ the following representation is valid
 \beq\label{cond37}
G\, d\la = \frac{2\I}{3}d\Omega_3 -2\I\xi\,d\Omega_1.
\eeq
 Moreover, the following formula holds
\beq\label{deri}
\frac{\pa}{\pa\xi} G(\la,\xi)= \frac{\la - h(\xi)}{\mathcal R(\la,\xi)}.
\eeq
\end{lemma}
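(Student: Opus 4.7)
The plan is to verify both identities by reducing them to equalities of rational functions of $\la$ multiplied by $\mathcal R(\la)^{-1}\,d\la$, matching the coefficients of the numerator polynomials, and closing the remaining degree of freedom with the appropriate normalization.

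For \eqref{cond37}, I would substitute \eqref{omm4} directly into the right-hand side; the result is $P(\la)/\mathcal R(\la)\,d\la$ where $P$ is monic of degree $2$ with linear coefficient $\xi - (\nu_1+\nu_2)$. Using \eqref{nu2} this becomes $\xi+(c^2-\gamma)/2$, which by \eqref{defmu} coincides with $-(\mu+\gamma)$, the linear coefficient of $(\la-\mu)(\la-\gamma)$. Hence the difference between the two sides of \eqref{cond37} is of the form $c\cdot d\la/\mathcal R(\la)$ for some constant $c$. Both sides, however, have vanishing $\mathfrak a$-period: the left-hand side because \eqref{norm6} is precisely (up to the standard doubling from traversing the cycle on both sheets) the $\mathfrak a$-period of $G\,d\la$, and the right-hand side because the $\mathfrak a$-periods of $d\Omega_1$ and $d\Omega_3$ are zero by construction. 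Since $\int_{\mathfrak a} d\la/\mathcal R(\la) \neq 0$, we conclude $c=0$.

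For \eqref{deri}, I would differentiate $G=(\la-\mu)(\la-\gamma)/\mathcal R$ in $\xi$ directly. Using $\mathcal R^2 = \la(\la+c^2)(\la-\gamma)$ one obtains $\pa\mathcal R/\pa\xi = -\gamma'\la(\la+c^2)/(2\mathcal R)$, and then the identity $\la(\la+c^2)/\mathcal R^2 = 1/(\la-\gamma)$ allows one to collapse the result into the form
\[
\frac{\pa G}{\pa\xi}= \frac{-\mu'(\la-\gamma) - \tfrac{\gamma'}{2}(\la-\mu)}{\mathcal R(\la)}.
\]
The linear coefficient of the numerator is $-\mu'-\gamma'/2$, which equals $1$ by the defining relation \eqref{defmu}, namely $\mu' = -1-\gamma'/2$. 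The constant term is $\mu'\gamma+\gamma'\mu/2 = -\gamma+(\gamma'/2)(\mu-\gamma)$; substituting the expression \eqref{deriv} for $\gamma'$ from Lemma \ref{edecr} together with the algebraic identity $\mu-\gamma=-(3\gamma+2\xi+c^2)/2$ makes the factor $(3\gamma+2\xi+c^2)$ cancel, leaving $\gamma-h$. Thus the constant term equals $-h$, as required.

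The main obstacle is the constant-term matching in \eqref{cond37}: the equality $\mu\gamma = \nu_1\nu_2-\xi h$ is not visible from the definitions of $\mu$ and $\nu_1+\nu_2$ alone, and must be extracted from the $\mathfrak a$-period normalization. The bookkeeping required here is the identification of \eqref{norm6} with the loop integral over $\mathfrak a$, which involves tracking the sign of $\mathcal R$ between the two sheets; this is standard but deserves explicit verification. For \eqref{deri}, by contrast, there is no genuine obstacle once Lemma \ref{edecr} is available, since the identity is a direct computation that hinges on the explicit form of $\gamma'(\xi)$ derived there.
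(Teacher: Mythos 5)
Your proposal is correct and follows essentially the same route as the paper: for \eqref{cond37} the paper likewise matches the singular part at $\infty$ (equivalently, your explicit matching of the $\la^2$ and $\la$ coefficients via \eqref{nu2} and \eqref{defmu}) and then kills the residual multiple of the holomorphic differential using the common vanishing of the $\mathfrak a$-periods, and for \eqref{deri} it performs the same direct differentiation, reducing the constant term to the identity $2\mu'\gamma+\mu\gamma'+2h=0$ via \eqref{deriv}. No gaps.
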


\begin{proof} 
By \eqref{norm6} the differential $G\, d\la$ has vanishing $\mathfrak a$-period. Moreover, using \eqref{nu2}, \eqref{defmu} one checks that $G\, d\la - \frac{2\I}{3}d\Omega_3 +2\I\xi\,d\Omega_1$
has no pole at $\infty$ and hence must vanish. This proves \eqref{cond37}.

To get \eqref{deri} we evaluate
\begin{align}\nn
\frac{\pa}{\pa\xi}G(\la,\xi) - \frac{\la - h(\xi)}{\mathcal R(\la,\xi)}
&=\frac{\pa}{\pa\xi}\left(\frac{(\lambda-\mu)\sqrt{\lambda-\gamma}}{\sqrt{\lambda(\lambda+c^2)}}\right) -\frac{\la - h}{\mathcal R(\la)} \\ \nn
 &=\frac{(-2\mu^\prime - \gamma^\prime)\lambda+2\mu^\prime\gamma+\mu \gamma^\prime -2\la +2h}{2\mathcal R(\lambda)}.
\end{align}
Formula  \eqref{defmu} implies:
\beq\label{muprime}
2\mu^\prime+\gamma^\prime+2=0.
\eeq
Thus, to justify \eqref{deri} one has to prove the equality
\[
2\mu^\prime\gamma+\mu \gamma^\prime  + 2h=0.
\]
 By virtue of \eqref{deriv}, \eqref{muprime}, and \eqref{defmu} we get 
\begin{align*}
2\mu^\prime\gamma+\mu \gamma^\prime  + 2h &= (-\gamma^\prime-2)\gamma -\gamma^\prime\left(\xi +\frac{c^2 + \gamma}{2}\right)+2h \\ 
&= -\frac{1}{2}\gamma^\prime(3\gamma+c^2+2\xi)-2\gamma+2h=0,
\end{align*} 
which proves \eqref{deri}.
\end{proof}

\begin{lemma}
Let $B(\xi)$, $\Gamma(\xi)$,  and $z(\xi)$ be defined by \eqref{defB}, \eqref{Gamdef} and \eqref{defz}, respectively.  Let $V(\xi)$ and $W(\xi)$ be the wave number and the frequency, defined by \eqref{VW} and let $h(\xi)$ be as in \eqref{defh}. Then the following identities hold:
\beq \label{main13} t B(\xi)= -4W(\xi) t  + V(\xi)x,
\eeq
\beq\label{main15} (a)\ \frac{d}{d\xi} B(\xi)= 12 V(\xi), \quad (b)\ 4\pi\Gamma(\xi)=-V(\xi),
\eeq
\beq\label{main17}\frac{1}{6}\frac{d}{d\xi} z(\xi)=  c^2  + a^2 (\xi) + 2 h(\xi).
\eeq
\end{lemma}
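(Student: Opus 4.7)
The plan is to derive all three identities from the $\mathfrak{b}$-period relations for \eqref{cond37} and its $\xi$-derivative, together with Riemann's bilinear relation for \eqref{main15}(b) and a short algebraic manipulation for \eqref{main17}.

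For \eqref{main13}, I take the $\mathfrak{b}$-period of \eqref{cond37}. Using \eqref{VW}, the right-hand side becomes $\tfrac{2\I}{3}(\I W)-2\I\xi(\I V)=-\tfrac{2W}{3}+2\xi V$. The left-hand side $\int_\mathfrak{b} G\,d\la$ is evaluated by collapsing $\mathfrak{b}$ onto the two rims of the cut $[-c^2,-a^2]$ and using $\mathcal R_-=-\mathcal R_+$; the resulting real integral over $[-c^2,\gamma]$, after the substitution $\la=-s^2$, is recognized as $B(\xi)/6$ in view of \eqref{defB}. Equating the two sides and using $\xi=x/(12t)$ gives $tB=-4Wt+Vx$. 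For \eqref{main15}(a), I differentiate \eqref{cond37} in $\xi$: by \eqref{deri} the $\xi$-derivative of the left-hand side is $\tfrac{\la-h}{\mathcal R}d\la=-2\I\,d\Omega_1$. Taking the $\mathfrak{b}$-period and commuting $\pa_\xi$ with $\int_\mathfrak{b}$ (justified since one may choose a representative of $\mathfrak{b}$ independent of $\xi$), I obtain $\tfrac{1}{6}B'(\xi)=-2\I\cdot\I V=2V$, hence $B'=12V$.

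For \eqref{main15}(b), I invoke Riemann's bilinear relation applied to the holomorphic differential $d\omega$ (with $\int_\mathfrak{a} d\omega=2\pi\I$, $\int_\mathfrak{b} d\omega=\tilde\tau$) and the second-kind differential $d\Omega_1$ (with $\int_\mathfrak{a} d\Omega_1=0$). In the local coordinate $z=1/\sqrt{\la}$ near $p=\infty$, brief expansions give $d\omega\sim -2\tilde\Gamma\,dz$ and $d\Omega_1\sim -\I z^{-2}\,dz$, so $\mathrm{Res}_\infty(\Omega\cdot d\Omega_1)=2\I\tilde\Gamma$ with $\Omega(p)=\int^p d\omega$. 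The bilinear identity then reduces to $(2\pi\I)(\I V)=2\pi\I\cdot 2\I\tilde\Gamma$, giving $V=2\tilde\Gamma$. Rewriting $\int_\mathfrak{a} d\la/\mathcal R$ via $\la=-s^2$ and comparing with \eqref{Gamdef} yields $\tilde\Gamma=-2\pi\Gamma$, so $4\pi\Gamma=-V$.

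Finally, \eqref{main17} is obtained by direct algebra. Differentiating \eqref{defz} and using $\gamma=-a^2$, $\gamma'=-2aa'$ produces
\[
\tfrac{1}{6}z'(\xi)=c^2+\gamma+\tfrac{1}{2}\gamma'(3\gamma+2\xi+c^2);
\]
\eqref{deriv} converts the second summand into $2(h-\gamma)$, and the sum collapses to $c^2-\gamma+2h=c^2+a^2+2h$. The main anticipated obstacle is careful sign and orientation bookkeeping in the computation of $\int_\mathfrak{b} G\,d\la$ (the clockwise $\mathfrak{b}$-cycle and the branch choice for $\mathcal R$ on each rim of the cut) and in the $\tilde\Gamma\leftrightarrow\Gamma$ conversion (the choice of $\mathfrak{a}$-cycle along the gap and the purely imaginary value of $\mathcal R$ on $(\gamma,0)$); the underlying computations themselves are routine.
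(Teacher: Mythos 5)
Your proof is correct and follows essentially the same route as the paper: \eqref{main13} and \eqref{main15}(a) via the $\mathfrak b$-periods of the decomposition \eqref{cond37} and the derivative formula \eqref{deri}, \eqref{main15}(b) via the residue/bilinear identity at infinity with the local expansions of $d\Omega_1$ and $\int^p d\omega$, and \eqref{main17} by direct differentiation using \eqref{deriv}. The only cosmetic difference is in the intermediate sign conventions of part (b) (your $\tilde\Gamma=-2\pi\Gamma$ together with residue $+2\I\tilde\Gamma$, versus the paper's $\tilde\Gamma=2\pi\Gamma$ together with $-2\I\tilde\Gamma$), which cancel and yield the same conclusion $4\pi\Gamma=-V$.
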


\begin{proof}
To get \eqref{main13} we make change of variables $\la=-s^2$  in \eqref{defB}, and take into account  \eqref{cond37}, \eqref{defmu}, $\xi=\frac{x}{12 t}$, 
and the definition of the $\mathfrak b$ period on the Riemann surface $\M(\xi)$. Then
\begin{align*}
t B(\xi)& =-12 t \int_{-a^2}^ {-c^2} (\la - \mu)\sqrt{\frac{-a^2 - \la}{-\la(\la + c^2)}}d\la=12 t \int_{-c^2}^{\gamma} G(\la,\xi)d\la\\ & =4\I t \int_{\mathfrak b} d \Omega_3  - \I x \int_{\mathfrak b} d\Omega_1= -4tW(\xi) + x V(\xi).
\end{align*}
This proves \eqref{main13}. Formula \eqref{main15}, (a) follows from \eqref{cond37} and \eqref{deri}:
\[
\frac{d}{d\xi}B(\xi)=12\frac{d}{d\xi}\int_{-c^2}^{\gamma} G(\la,\xi)d\la=12\gamma^\prime(\xi) G(\gamma(\xi),\xi)-12\I\int_{\mathfrak b} d\Omega_1=12 V(\xi).
\]
Next, by definition of the $\mathfrak a$ period, formula  \eqref{Gamdef} reads:
\beq\label{valG}
\Gamma(\xi)=\left(2\int_0^{-a^2}\frac{d\la}{\sqrt{-\la(\la +c^2)(\la +a^2)}}\right)^{-1}=\frac{\I}{\int_{\mathfrak a}\frac{d\la}{\mathcal R(\la)}}=\frac{\tilde\Gamma}{2\pi},
\eeq
where $\tilde\Gamma$ is the normalization constant from \eqref{omm}.
On the other side, formulas \eqref{omm}, \eqref{omm4}, and the residue theorem (\cite{FK}) yield
\[
\int_{\mathfrak a} d\omega \int_{\mathfrak b}d\Omega_1=2\pi\I \int_{\mathfrak b}d\Omega_1=2\pi\I \res_{\infty} \left(d\Omega_1(p)\int_{\infty}^p d\omega \right),\quad p=(\la,+).
\]
Since in the local parameter $z=\la^{-1/2}$, $z\to 0$, we have
\[
d\Omega_1=\left(\frac{\I}{2\sqrt{\la}} + O\left(\frac{1}{\la^{3/2}}\right)\right)d\la=\frac{-\I}{z^2}(1+o(1))dz,
\]
and 
\[
\int^{\la}_{\infty}d\omega=-\frac{2\tilde\Gamma}{\sqrt\la} +  O\left(\frac{1}{\la^{3/2}}\right)=-2\tilde\Gamma z (1 +o(1)),
\]
then by \eqref{valG}
\[
\res_{\infty} \left(d\Omega_1(p)\int_{\infty}^p d\omega \right)=-2\I\tilde\Gamma= - 4\pi\I \Gamma.
\]
Together with \eqref{VW} this proves \eqref{main15}, (b).
To prove the remaining formula \eqref{main17}, we represent $z(\xi)$ via $\gamma(\xi)$ and apply \eqref{deriv}. Then we get
\begin{align*}
\frac{d}{d\xi}z(\xi) &=\frac{1}{2}\frac{d}{d\xi}\left(12\xi(c^2 +\gamma(\xi)) + 3c^4 + 9\gamma(\xi)^2 + 6\gamma(\xi)c^2\right)\\ &=
6(c^2 +\gamma(\xi)) + \left(6\xi +9\gamma(\xi) +3c^2\right)\gamma^\prime(\xi)\\
&= 6(c^2 +\gamma(\xi)) +12 (h(\xi) - \gamma(\xi))=6(c^2 +a^2(\xi)) + 12 h(\xi).\qedhere
\end{align*}
\end{proof}

Now we are ready to compare formula \eqref{secon} with \eqref{IM}. Comparing the periods $\tau$ defined by \eqref{taupe}, and $\tilde\tau$, defined by \eqref{omm}--\eqref{taup}, we observe that $2\pi\I\tau=\tilde\tau$, and therefore
\[
\theta(z\mid\tilde\tau)=\theta_3(\frac{z}{2\pi\I}\mid\tau).
\] 
Put 
\beq\label{mathcal K}
\mathcal B(\xi)=\frac{\Gamma(\xi)}{6\pi}\left(\frac{d}{d\xi}B(\xi)\right),\quad \mathcal C(\xi)=\frac{1}{6}\frac{d}{d\xi}z(\xi).
\eeq 
Then $q_\text{RH}(x,t)$ can be represented as
 \begin{align} \nn q_\text{RH}(x,t)&=\mathcal B(\xi)\frac{d^2}{dv^2}
 \log\theta_3\left(\frac{\I t B(\xi) +\I \Delta(\xi)+2\I\pi v}{2\pi\I} \right)\mid_{v=0}
+ \mathcal C(\xi), \\ \label{qurh}
&= -4\pi^2 \mathcal B(\xi) \frac{d^2}{dv^2}
 \log\theta\left(\I t B(\xi) +\I \Delta(\xi)+ v \right)\mid_{v=0}
+ \mathcal C(\xi).
\end{align}
On the other hand, by \eqref{main17} we have
\begin{align} \nn
&q_\text{IM}(x,t) -\mathcal C(\xi)\\ \nn &=-2(\I V(\xi))^2 \frac{d^2}{dv^2}\log\theta\left(\I V(\xi) x- 4\I W(\xi) t - A(p_0,\xi) - \mathcal K(\xi) + v\right)\mid_{v=0}.
\end{align} 
Substituting \eqref{main13} and \eqref{main15} into \eqref{mathcal K} and then into \eqref{qurh} we get $\mathcal B(\xi)= -V^2(\xi)$ and $\I t B(\xi)=\I V(\xi)x - 4\I W(\xi) t$.
We conclude that $q_\text{RH}(x,t)=q_\text{IM}(x,t)$ iff there exist a point $p_0$, $\pi(p_0)\in[-a^2, 0]$ such that the following equality is fulfilled:
\[
- A(p_0,\xi) -\mathcal K(\xi)=\I\Delta(\xi) \pmod{2\pi\I},
\] 
where $\Delta(\xi)$ is defined by \eqref{defDelta}. Since
\[
A(-c^2)=\pi\I \pmod{2\pi\I},\ \ A(-a^2)- A(-c^2)=\frac{\tilde\tau}{2},\ \ \mathcal K=-\frac{\tilde\tau}{2} + \pi\I,
\]
and $\Delta(\xi)$  is a real value, then the  point $p_0(\xi)$ can be found as the unique solution of the Jacobi inversion problem (cf.\ \cite{FK}):
\beq\label{jac}
\int_{-a^2(\xi)}^{p_0(\xi)} d\omega=-\I\Delta(\xi).
\eeq
In summary we have proved 
\begin{theorem} 
 For any  fixed $\xi=\frac{x}{12 t}\in (-c^2/2, c^2/3)$ the function $q_\text{RH}(x,t)$ is the usual one-gap solution of the KdV equation:
 \begin{align}\nn
q_\text{RH}(x,t)&=-2\frac{d^2}{dx^2}\log\theta\left(\I V(\xi) x- 4\I W(\xi) t - A(p_0,\xi) - \mathcal K(\xi)\right)\\ \nn
& -a(\xi)^2 - c^2 - 2 h(\xi),
\end{align}
associated with the spectrum  $[-c^2, -a^2(\xi)]\cup[0,\infty)$ and the Dirichlet divisor $p_0(\xi)$, defined via the Jacobi inversion problem \eqref{jac}.
\end{theorem}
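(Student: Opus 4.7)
The plan is to prove the theorem by matching each ingredient of $q_\text{RH}(x,t)$ in \eqref{secon} with the corresponding ingredient of the Its--Matveev formula \eqref{IM}, relying on the identities already assembled in Lemmas \ref{edecr}--\ref{abelint} together with \eqref{main13}--\eqref{main17}. Apart from one genuine existence step (Jacobi inversion), the argument is a bookkeeping reduction.

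First I would reconcile the two theta-function conventions. Comparing \eqref{taupe} with \eqref{omm}--\eqref{taup} gives $\tilde\tau=2\pi\I\tau$, hence $\theta_3(v\mid\tau)=\theta(2\pi\I v\mid\tilde\tau)$, and \eqref{secon} rewrites in the form \eqref{qurh}. Next I would match the prefactor and the constant term. Combining \eqref{main15}(a) and (b) yields $\mathcal B(\xi)=-V(\xi)^2/(2\pi^2)$, so that $-4\pi^2\mathcal B(\xi)=-2(\I V(\xi))^2$, which is precisely the prefactor in \eqref{IM}. Similarly, \eqref{main17} immediately gives $\mathcal C(\xi)=c^2+a(\xi)^2+2h(\xi)$, matching the constant in \eqref{IM}. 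For the dynamical part of the theta-argument, \eqref{main13} delivers $\I tB(\xi)=\I V(\xi)x-4\I W(\xi)t$, which is exactly the linear combination appearing in the Its--Matveev phase.

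After these identifications, the equality $q_\text{RH}=q_\text{IM}$ reduces to locating a point $p_0(\xi)\in\M(\xi)$ with $\pi(p_0)\in[-a^2(\xi),0]$ such that
\[
-A(p_0,\xi)-\mathcal K(\xi)\equiv\I\Delta(\xi)\pmod{2\pi\I}.
\]
This is the main (and only non-bookkeeping) obstacle. To handle it I would first note that $\Delta(\xi)$ defined by \eqref{defDelta} is manifestly real, and, second, that for any $p$ with projection in the gap $[-a^2(\xi),0]$ the quantity $A(p)+\mathcal K$ is pure imaginary, as follows from $A(-c^2)\equiv\pi\I$, $A(-a^2)-A(-c^2)=\tilde\tau/2$, and $\mathcal K=-\tilde\tau/2+\pi\I$. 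Consequently the equation for $p_0$ is equivalent to the Jacobi inversion problem \eqref{jac}, which has a unique solution on the elliptic surface $\M(\xi)$ by the classical Jacobi inversion theorem. Assembling these four matches gives the claimed representation of $q_\text{RH}$ as an Its--Matveev one-gap solution with spectrum $[-c^2,-a^2(\xi)]\cup[0,\infty)$ and Dirichlet divisor $p_0(\xi)$.
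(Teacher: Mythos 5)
Your proposal is correct and follows essentially the same route as the paper: reconcile the two theta conventions via $\tilde\tau=2\pi\I\tau$, match the prefactor, the constant term and the linear phase through \eqref{main13}--\eqref{main17}, and reduce the remaining real phase shift $\Delta(\xi)$ to the Jacobi inversion problem \eqref{jac} using that $A(p)+\mathcal K$ is purely imaginary for $\pi(p)$ in the gap. Your computation $\mathcal B(\xi)=-V(\xi)^2/(2\pi^2)$ is in fact the arithmetically consistent form of what the paper states as $\mathcal B=-V^2$, and it is exactly what makes the prefactor $-4\pi^2\mathcal B(\xi)=-2(\I V(\xi))^2$ come out right.
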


Note that since $\theta(z+2\pi\I)=\theta(z)$ we see that $q_\text{RH}(x,t)$ is a periodic function with respect to $x$ and $t$ of periods $\frac{2\pi}{V(\xi)}$ and $\frac{\pi}{2 W(\xi)}$, respectively.

\section{Copmarison between $q_\text{RH}(x,t)$ and the Gurevich--Pitaevskii formula} 

Our next aim is to find a connection between the two functions $\mathfrak m(\xi)$ and $a(\xi)$ implicitly given by \eqref{neav} and \eqref{norm5}.
\begin{lemma} \label{lemperiod}
The following  is true:
\[
\mathfrak m^2(\xi)=\frac{a^2(\xi)}{c^2}.
\]
\end{lemma}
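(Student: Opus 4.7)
The plan is to verify that \eqref{norm5} and \eqref{neav} determine the same implicit relation between $\xi$ and the parameter once we set $a(\xi) = c\mathfrak{m}(\xi)$; since each equation determines its variable uniquely on $(-c^2/2, c^2/3)$, this forces the claim.

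First I would recast \eqref{norm5} in a tractable real form. The substitution $\lambda = -s^2$ used by the authors in deriving \eqref{norm6}--\eqref{norm7} shows that \eqref{norm5} is equivalent to
\[
\int_0^{a(\xi)}\left(\xi + \frac{c^2 - a^2(\xi)}{2} - s^2\right)\sqrt{\frac{a^2(\xi) - s^2}{c^2 - s^2}}\, ds = 0,
\]
which, being affine in $\xi$, rearranges to $\xi + (c^2-a^2)/2 = I_2/I_1$ with $I_j := \int_0^{a(\xi)} s^{2(j-1)}\sqrt{(a^2 - s^2)/(c^2-s^2)}\, ds$ for $j = 1, 2$.

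Second, I would substitute $s = a\sin\theta$ and set $\mathfrak{m} := a/c$, whence
\[
\sqrt{\frac{a^2-s^2}{c^2-s^2}}\, ds = \frac{a^2\cos^2\theta}{c\sqrt{1-\mathfrak{m}^2\sin^2\theta}}\, d\theta.
\]
Both $I_1$ and $I_2$ then reduce to combinations of $\int_0^{\pi/2}\sin^{2k}\theta/\sqrt{1-\mathfrak{m}^2\sin^2\theta}\, d\theta$ for $k \in \{0,1,2\}$. The $k=0,1$ integrals give $K(\mathfrak{m})$ and $(K(\mathfrak{m})-E(\mathfrak{m}))/\mathfrak{m}^2$, while the $k=2$ case, obtained from a standard recursion (or by differentiating $K, E$ in $\mathfrak{m}$), equals $((2+\mathfrak{m}^2)K(\mathfrak{m})-2(1+\mathfrak{m}^2)E(\mathfrak{m}))/(3\mathfrak{m}^4)$. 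Collecting gives
\[
I_1 = c\bigl[E(\mathfrak{m}) - (1-\mathfrak{m}^2)K(\mathfrak{m})\bigr],\quad I_2 = \frac{c^3}{3}\bigl[(2-\mathfrak{m}^2)E(\mathfrak{m}) - 2(1-\mathfrak{m}^2)K(\mathfrak{m})\bigr].
\]

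Third, I would form $I_2/I_1$, subtract $(c^2-a^2)/2 = c^2(1-\mathfrak{m}^2)/2$, and bring to a common denominator. The $E$-coefficient in the numerator collapses to $c^2(1+\mathfrak{m}^2)$ and the $K$-coefficient to $-c^2(1-\mathfrak{m}^2)(1+3\mathfrak{m}^2)$; a short rearrangement then produces
\[
\xi = \frac{c^2}{6}\left(1+\mathfrak{m}^2 - \frac{2\mathfrak{m}^2(1-\mathfrak{m}^2)K(\mathfrak{m})}{E(\mathfrak{m}) - (1-\mathfrak{m}^2)K(\mathfrak{m})}\right),
\]
which is precisely \eqref{neav}. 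The monotonicity of $a(\xi)$ (Lemma \ref{edecr}) and of $\mathfrak{m}(\xi)$ then identifies the two moduli. The only nontrivial ingredient is the $\sin^4\theta$ reduction; everything else is bookkeeping, so I do not anticipate any serious obstacle.
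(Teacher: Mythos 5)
Your proposal is correct and follows essentially the same route as the paper: both reduce the implicit relation \eqref{norm5} to the complete elliptic integrals $K(\mathfrak m)$, $E(\mathfrak m)$ with $\mathfrak m=a/c$ and verify algebraically that the result is \eqref{neav}. The only (cosmetic) difference is that the paper splits the integrand as $\bigl(\xi-\tfrac{c^2+a^2}{2}\bigr)+(c^2-s^2)$ and cites table values for $\int_0^a\sqrt{(a^2-s^2)(c^2-s^2)}\,ds$, whereas you keep the $s^2$ moment and evaluate it by the $\sin^4\theta$ reduction; your stated values of $I_1$, $I_2$ and the final coefficient collapse check out.
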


\begin{proof}
Represent \eqref{norm5} as
\begin{align}\label{ellipt}
0&=\left(-\frac{c^2 +a^2(\xi)}{2}+\xi\right)\int_0^{a(\xi)}\sqrt{\frac{ a^2(\xi)-s^2}{c^2 - s^2}}ds \\ \nn
+&\int_0^{a(\xi)}\sqrt{ (a^2(\xi)-s^2)(c^2 - s^2)} ds=\left(\xi -\frac{c^2 +a^2(\xi)}{2}\right) I_1(\xi) + I_2(\xi).
\end{align}
Put $ m:=m(\xi)=\frac{a(\xi)}{c}$. Then (cf.\ \cite{Dwight}, formulas 781.61 and 781.22)
\[
I_1=c (E(m) -  (1-m^2)K(m)),\quad I_2=\frac{c^3}{3} \{(m^2 - 1)K(m) + (m^2 +1)E(m)\},
\]
where $K(m)$ and $E(m)$ are the standard complete elliptic integrals for $0<m<1$. Substituting this into \eqref{ellipt} we obtain:
\begin{align}\nn
\xi&=c^2\left\{ \frac{m^2 +1}{2} - \frac{1}{3}\frac{(m^2 - 1)K(m) + (m^2 +1)E(m)}{E(m) -  (1-m^2)K(m)}\right\}\\ \nn
&=\frac{c^2}{6}\frac{ E(m)(m^2 +1) -(1-m^2)(m^2 +1)K(m) + 2m^2 (m^2-1)K(m)}{E(m) -  (1-m^2)K(m)}\\ \nn
&= \frac{c^2}{6}\left((m^2 +1) +\frac{ 2m^2 (m^2-1)K(m)}{E(m) -  (1-m^2)K(m)}\right).
\end{align}
Thus, \eqref{neav} and \eqref{norm5} define the same function: $\mathfrak m(\xi)=m(\xi)$.
\end{proof}

Lemma \ref{lemperiod} implies that the period \eqref{taupe} corresponds in the standard way (cf.\ \cite{Akh}) to the elliptic modulus $m$ and the following formula is valid:
\[
\mathrm{dn}^2(s,m)=\frac{d^2}{du^2}\log\Theta(u\mid\tau) +\frac{E(m)}{K(m)},
\]
where
\[
\Theta(u\mid\tau)=\theta_3\left(\frac{u}{2 K(m)}+\frac{1}{2}\mid\tau\right)=
\theta_3\left(\frac{u}{2 K(m)}+\frac{1}{2}\right),
\]
and $\theta_3(s)$ is as in \eqref{secon}. 

For the remainder we will fix $\xi$ and omit it from our notation. Fixing $\xi$ we also fix $m=a\,c^{-1}$ and consequently we will also omit $m$ in the complete elliptic integrals: $K:=K(m)$, $E:=E(m)$. Therefore,
\begin{align}\label{8}
q_{GP}(x,t)&=-\frac{c^2}{2K^2}\frac{d^2}{ dv^2}\log\theta_3\left(\frac{ t c(6\xi - c^ 2(1 + m^2))}{K} +\frac{1}{2} +v\right) \\ \nn &+ c^2(1-m^2) -\frac{2E}{K}.
\end{align}
Thus, to convince ourselves that this formula coincides with \eqref{secon} (possibly, up to a phase shift), it is sufficient to show:

\begin{lemma} The following equalities hold:
\begin{align}\label{1}-\frac{c^2}{2K^2}&=\frac{\Gamma}{6\pi}\frac{d B}{d\xi};\\
\label{2} -\frac{6 c \xi - c^3 (1+m^2)}{K} &=\frac{B}{2\pi};\\
\label{3} c^2   -\frac{E}{K}&=  - h,\end{align}
where $B$, $\Gamma$, and $h$ are defined by \eqref{defB}, \eqref{Gamdef}, and \eqref{defh}, respectively.
\end{lemma}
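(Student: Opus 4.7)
All three identities reduce to explicit evaluations of complete elliptic integrals in the modulus $m = a/c$ (Lemma~\ref{lemperiod}). Identity \eqref{1} follows from a one-line substitution combined with \eqref{main15}(a,b); \eqref{3} is a ratio of two $\mathfrak a$-cycle integrals over the gap; and \eqref{2} is handled via the identity $B = 12\xi V - 4W$ (a rewriting of \eqref{main13} using $x = 12\xi t$) after computing $W$ by the residue technique already used for \eqref{main15}(b).

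\textbf{Identities \eqref{1} and \eqref{3}.} For \eqref{1}, I would first chain \eqref{main15}(a) and \eqref{main15}(b) to get $\tfrac{\Gamma}{6\pi}\tfrac{dB}{d\xi} = \tfrac{\Gamma}{6\pi}(12V) = -8\Gamma^2$, so only $\Gamma^2 = c^2/(16K^2)$ remains. Applying $s = a\sin\phi$ in \eqref{Gamdef}, with $\sqrt{(c^2-s^2)(a^2-s^2)} = ac\cos\phi\sqrt{1-m^2\sin^2\phi}$, the integral collapses to $2K/c$ and gives $\Gamma = -c/(4K)$. For \eqref{3}, parametrize the $\mathfrak a$-cycle in \eqref{defh} by $\la = -s^2$, $s\in[0,a]$; on the upper sheet $\mathcal R(\la) = \I s\sqrt{(c^2-s^2)(a^2-s^2)}$, and the common imaginary factor cancels in the defining ratio of $h$, leaving
\[
h \;=\; -\,\frac{\int_0^{a} s^2\,ds\big/\sqrt{(c^2-s^2)(a^2-s^2)}}{\int_0^{a} ds\big/\sqrt{(c^2-s^2)(a^2-s^2)}}.
\]
The same substitution $s = a\sin\phi$ together with $\int_0^{\pi/2}\sin^2\phi\,d\phi/\sqrt{1-m^2\sin^2\phi} = (K-E)/m^2$ evaluates the denominator to $K/c$ and the numerator to $c(K-E)$, so $h = -c^2(K-E)/K$, which is \eqref{3}.

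\textbf{Identity \eqref{2}.} With $V = -4\pi\Gamma = \pi c/K$ already in hand from the previous step, only $W$ remains. I would compute it by the exact analogue of the proof of \eqref{main15}(b): from the Riemann bilinear relation and $\int_{\mathfrak a}d\Omega_3 = 0$, one has $\I W = \int_{\mathfrak b}d\Omega_3 = \res_\infty\!\bigl(d\Omega_3(p)\int_\infty^p d\omega\bigr)$. Expanding in the local parameter $z = \la^{-1/2}$, with $(\la-\nu_1)(\la-\nu_2) = \la^2 + \tfrac{c^2+a^2}{2}\la + \nu_1\nu_2$ (by \eqref{nu2}) and $\mathcal R(\la)^{-1} = \la^{-3/2}\bigl(1-(c^2+a^2)/(2\la) + O(\la^{-2})\bigr)$, the unknown $\nu_1\nu_2$ enters only the regular part of the Laurent series and drops out; the residue is determined entirely by the fixed quantity $\nu_1+\nu_2 = -(c^2+a^2)/2$, which is precisely the content of \eqref{nu2}. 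Reading it off expresses $W$ as an explicit multiple of $\Gamma(c^2+a^2) = -\pi c^3(1+m^2)/(2K)$, and substituting $V$ and $W$ into $B = 12\xi V - 4W$ produces \eqref{2}.

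\textbf{The main obstacle} will be the sign- and orientation-bookkeeping in the residue calculation for $W$; the structural cancellation of the $\nu_1\nu_2$-contribution against the $(c^2+a^2)$-piece of the expansion of $\mathcal R^{-1}$ is exactly what \eqref{nu2} is designed to guarantee, so once the paper's sign conventions for the $\mathfrak b$-cycle are matched the computation is routine.
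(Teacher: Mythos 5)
Your treatment of \eqref{1} and \eqref{3} is essentially identical to the paper's: the paper chains \eqref{main15}(a),(b) to reduce \eqref{1} to $-8\Gamma^2$ and then quotes Dwight's tables (781.01, 781.11) for the two elliptic integrals that you instead evaluate by the substitution $s=a\sin\phi$; the content is the same. (Your conclusion $-h=c^2(K-E)/K$ is exactly what the paper's own computation yields; the displayed form $c^2-\tfrac{E}{K}$ in the statement is a typo for $c^2\bigl(1-\tfrac{E}{K}\bigr)$.) For \eqref{2} your route genuinely differs: the paper simply cites the relation $W=(c^2+a^2)\Gamma$ from Marchenko's book (eq.\ (4.4.13) of \cite{M}), whereas you derive the value of $W=\tfrac{1}{\I}\int_{\mathfrak b}d\Omega_3$ from scratch by the same residue/bilinear-relation argument used for $\int_{\mathfrak b}d\Omega_1$ in the proof of \eqref{main15}(b). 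That is a legitimate and more self-contained argument, and your structural observation is correct: the normalization \eqref{nu2} kills the $\la^{-1/2}d\la$ term, so $\nu_1\nu_2$ contributes only to the regular part and the residue is fixed by $\nu_1+\nu_2=-\tfrac{c^2+a^2}{2}$ together with the $z^3$-coefficient of $\int_\infty^p d\omega$.

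There is, however, one concrete error in the numbers you report, and it is not of the harmless ``overall sign absorbed by evenness of $\theta_3$'' type. With your (correct, per \eqref{Gamdef}) value $\Gamma=-c/(4K)$ and hence $V=-4\pi\Gamma=\pi c/K$, your stated $W=-\pi c(c^2+a^2)/(2K)$ has the wrong sign: substituting into $B=12\xi V-4W$ gives $\tfrac{B}{2\pi}=\tfrac{6c\xi+c^3(1+m^2)}{K}$, i.e.\ the wrong \emph{relative} sign between the two terms, which no symmetry of $\theta_3$ repairs. The correct value is $W=+\pi c(c^2+a^2)/(2K)=-2\pi\Gamma(c^2+a^2)$, as one can check independently of all orientation conventions in the degenerate limit $\xi\to-c^2/2$, $a\to 0$: there \eqref{defB} gives directly $B=-24\int_0^c s^3(c^2-s^2)^{-1/2}ds=-16c^3$, while $V\to 2c$, so $-4W=B-12\xi V=-16c^3+12c^3$ forces $W\to c^3>0$. (With these values $\tfrac{B}{2\pi}=\tfrac{6c\xi-c^3(1+m^2)}{K}$, which agrees with \eqref{2} up to the overall sign that the paper itself discards because $\theta_3$ is even.) So the plan is sound and the $\nu_1\nu_2$-cancellation is the right key point, but you must rerun the residue computation with the paper's $\mathfrak b$-cycle orientation — or anchor the sign by the $a\to 0$ check above — before \eqref{2} follows.
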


\begin{proof}
Formulas \eqref{main15},  \eqref{Gamdef}, and 781.01 of \cite{Dwight} imply:
\beq\label{new1}
\frac{\Gamma}{6\pi}\frac{d B}{d\xi}=-8\Gamma^2=-\frac{1}{2}\left(\int_0^a\frac{ds}{\sqrt{(c^2 - s^2)(a^2 - s^2)}}\right)^{-2}=-\frac{c^2}{2K^2}.
\eeq
Next, by \eqref{defh} and formula 781.11 from \cite{Dwight}:
\[
- h=\int_0^a\frac{s^2 ds}{\sqrt{(c^2 - s^2)(a^2 - s^2)}}\left(\int_0^a\frac{ ds}{\sqrt{(c^2 - s^2)(a^2 - s^2)}}\right)^{-1}=\frac{c(K -E)}{\frac{1}{c}K}.
\]
This gives \eqref{3}. 
To prove \eqref{2} we use \eqref{main13} and \eqref{main15}. Namely, we have
\[
\frac{t B}{2\pi}=-2\Gamma x -4 Wt =-2 (x -2(c^2 +a^2)t) \Gamma.
\]
Here we used equation (4.4.13) of \cite{M}, which implies the equality $W=(c^2 + a^2) \Gamma$. By \eqref{new1} we have $\Gamma=\frac{c}{4K}$. Thus
\[
\frac{ B}{2\pi}=-\frac{t c}{K} \left(6\xi -(c^2 + a^2)\right),
\]
which proves \eqref{2}. Note that the opposite sign with respect to \eqref{8} of the first summand in $\theta_3$ is not essential as $\theta_3$ is even.
\end{proof}

We proved that formulas \eqref{secon} and \eqref{dnu} represent the same function up to the phase shift. Namely, instead of the summand $(2\pi)^{-1} \Delta$ in the argument of the theta function for $q_\text{RH}(x,t)$, in the same formula for $q_\text{GP}(x,t)$ we have the summand $\frac{1}{2}$. Recall now that the transmission coefficients for the pure step potential have the representation (cf.\ \cite{BF}, \cite{EGLT})
\[
T(k)=\frac{2\I k}{w(k)},\ \  T_1(k)=\frac{2\I \sqrt{k^2 + c^2}}{w(k)},\quad w(k)=\I k + \I\sqrt{k^2 + c^2}.
\]
Since in \eqref{defDelta} $k=\I s$, $s\in [0,c]$, then
\[
|w(k)|^2=|-s +\I\sqrt{c^2 - s^2}|^2=c^2,
\]
and
\[
|T(\I s)T_1(\I s)|=4s c^{-2}\sqrt{c^2 - s^2},\quad s\in [0,c].
\]
Therefore, the value of the phase shift for the pure step case is given by
\begin{align*}
\frac{\Delta_{ps}}{2\pi}&=\frac{\int_a^c\log\left(4sc^{-2}\sqrt{c^2 - s^2} \right)\left((c^2 - s^2)(s^2 - a^2)\right)^{-1/2} ds}{2\pi \int_0^a\left((c^2 - s^2)(s^2 - a^2)\right)^{-1/2}\,ds}\\
&=\frac{1}{2\pi K(m)}\int_m^1\frac{\log\left(4s\sqrt{1 - s^2} \right)}{\sqrt{(1 - s^2)(s^2 - m^2)} }ds .
\end{align*}

\noindent{\bf Acknowledgments.} We thank Alexei Rybkin and Johanna Michor for valuable discussions on this topic. I.E.\ is indebted to the Department of Mathematics at the University of Vienna for its hospitality and support during the fall of 2015, where part of this work was done.

\end{document}